\newtheorem*{theorem}{Theorem}
\begin{document}

% Use the \preprint command to place your local institutional report
% number in the upper righthand corner of the title page in preprint mode.
% Multiple \preprint commands are allowed.
% Use the 'preprintnumbers' class option to override journal defaults
% to display numbers if necessary
%\preprint{}

%Title of paper
\title{\bf Exact infrared scaling behavior of Randers-Finsler scalar field theories}

% repeat the \author .. \affiliation  etc. as needed
% \email, \thanks, \homepage, \altaffiliation all apply to the current
% author. Explanatory text should go in the []'s, actual e-mail
% address or url should go in the {}'s for \email and \homepage.
% Please use the appropriate macro foreach each type of information

% \affiliation command applies to all authors since the last
% \affiliation command. The \affiliation command should follow the
% other information
% \affiliation can be followed by \email, \homepage, \thanks as well.
%\author{William C. Vieira}
%\email{william.vieira@gmail.com}
%\affiliation{\it Departamento de F\'\i sica, Universidade Federal do Piau\'\i, 64049-550, Teresina, PI, Brazil}
\author{M. S. Mendes}
\author{J. F. S. Neto}
\author{R. F. Silva}
\author{H. A. S. Costa}
\author{P. R. S. Carvalho}
\email{prscarvalho@ufpi.edu.br}
\affiliation{\it Departamento de F\'\i sica, Universidade Federal do Piau\'\i, 64049-550, Teresina, PI, Brazil}
%\author{G. A. Alves}
%\affiliation{\it Departamento de F\'{i}sica, Universidade Estadual do Piau\'{i}, 64002-150, Teresina - PI, Brazil}

%\author{T. F. A. Alves}
%\author{Paulo R. S. Carvalho}
%\email{prscarvalho@ufpi.edu.br}
%%\address[1]{\it Departamento de F\'{i}sica, Universidade Federal do Piau\'{i}, 64049-550, Teresina, PI, Brazil}
%\affiliation{\it Departamento de F\'\i sica, Universidade Federal do Piau\'\i, 64049-550, Teresina, PI, Brazil}

%\homepage[]{Your web page}
%\thanks{}
%\altaffiliation{}

%Collaboration name if desired (requires use of superscriptaddress
%option in \documentclass). \noaffiliation is required (may also be
%used with the \author command).
%\collaboration can be followed by \email, \homepage, \thanks as well.
%\collaboration{}
%\noaffiliation

%\date{\today}

\begin{abstract}
We study the scaling behavior of Randers-Finsler massless scalar field theories in the infrared regime. For that, we compute analytically the radiative corrections to the corresponding anomalous dimensions, related to the critical exponents of the theory, first up to next-to-leading loop order and later for all-loop levels. We consider the effect of the Randers-Finsler space-time properties on the critical exponents by considering the parameter characterizing those space-times in its exact form. We employ field-theoretic renormalization group and $\epsilon$-expansion techniques at dimensions $d = 4 - \epsilon$ through three distinct and independent methods. At the end, we furnish the physical interpretation of the obtained results. 
\end{abstract}

% insert suggested PACS numbers in braces on next line
%\pacs{11.30.-j; 64.60.ae; 11.30.Cp}
% insert suggested keywords - APS authors don't need to do this
%\keywords{}

%\maketitle must follow title, authors, abstract, \pacs, and \keywords
\maketitle

% body of paper here - Use proper section commands
% References should be done using the \cite, \ref, and \label commands

\section{Introduction} 

\par Quantum fields in Randers-Finsler space-times \cite{Bejancu} have been studied in recent years \cite{10.1063.5.0065944,Silva2021}. The quantum aspects of fields embedded in those space-times were approached in the case of free fields \cite{Silva2021}. The Finsler geometry is characterized by some length interval, namely, $ds_{R} = F_{R}(x,\dot{x})dt$, where the general function $F_{R}(x,\dot{x})$ is called Finsler function \cite{Bejancu}. Under the appropriate limit, the Finsler geometry specializes to the Riemann one. For some given Finsler function, we have the Randers-Finsler space-time \cite{Silva2021}, \emph{i. e.}, $ds_{R} = (\sqrt{-g_{\mu\nu}\dot{x}^{\mu}\dot{x}^{\nu}} + \zeta a_{\mu}(x)\dot{x}^{\mu})dt$. This specific length interval is composed of the standard Lorentz invariant interval and a background vector $a_{\mu}$. The $\zeta$ parameter indicates the deviation of Lorentz symmetry preservation and breaks that symmetry. In practical calculations we can consider small deviations of the Lorentz symmetry by taking small values of $\zeta$ in which $\zeta^{2}g^{\mu\nu}a_{\mu}a_{\nu} \ll 1$ for space-like $a_{\mu}$ and $\zeta^{2}g^{\mu\nu}a_{\mu}a_{\nu} \gg 1$ for time-like $a_{\mu}$ \cite{Silva2021}. In this work we have to consider $\zeta$ in its exact form. Applications of Randers-Finsler space-times have been shown in astrophysics and cosmology \cite{VACARU2010224,PRAVEEN2025100030,KapsabelisE,PhysRevD.88.123510,PhysRevD.87.043506,CHANG2009173}.  

\par In this work we address the problem of considering the influence of the properties of such space-times on the anomalous dimensions of massless self-interacting O($N$) $\lambda\phi^{4}$ scalar field theory. These anomalous dimensions are, in turn, related to the critical exponents of the theory which dictates the scaling properties of the corresponding correlation functions. We compute effects beyond the free field and three-level regime by considering radiative corrections up to next-to-leading order (NLO). For that we employ field-theoretic perturbative renormalization group and $\epsilon$-expansion techniques at dimensions $d = 4 - \epsilon$ through three distinct and independent methods. In considering the parameter characterizing Randers-Finsler space-times, we first consider the evaluation of Feynman integrals of the theory in powers of this parameter. As this procedure leads to a tedious aim, we propose later to take into account this parameter in its exact form for NLO. We then display similar calculations by treating the parameter in its exact form and valid for any loop level.  

\par As it is known \cite{ZinnJustin,Amit}, the critical exponents are universal quantities. In the context of both fluids and magnetic systems, the critical behavior of these two kinds of systems have the same set of critical exponent. This fact can be understood once in the critical behavior of systems undergoing a continuous phase transition, the critical exponents do not depend on their microscopic details as the form of their lattices or their specific critical temperature values. They depend only on their dimension $d$, number $N$ of some $N$-component order parameter (spin dimension for magnetic systems) and the interactions among the constituents of the systems. We have to study systems belonging to some general O($N$) universality class whose systems are characterized by same specific values of the number of fields $N$, \emph{i. e.}: Ising ($N = 1$), XY ($N = 2$), Heisenberg ($N = 3$), self-avoiding random walk ($N = 0$), spherical ($N \rightarrow \infty$) etc \cite{Pelissetto2002549}. Once the critical exponents are universal quantities, their results as ones obtained through distinct and independent methods must be the same, although some quantities as the corresponding renormalization constants, $\beta$-functions, anomalous dimensions and fixed points present different expressions. The critical exponents furnish the the scaling behavior of the primitively one-particle irreducible vertex parts ($1$PI) of the theory, namely the $\Gamma^{(2)}$, $\Gamma^{(4)}$ and $\Gamma^{(2,1)}$. These vertex parts are related to the correlation functions of the theory and express all the divergent properties necessary to determine the critical exponents. As we shall see, by removing the divergences of the theory, we can determine finite $\beta$-functions and anomalous dimensions which can be used for determining the critical exponents. This aims is attained by computing the anomalous dimensions at the fixed point which is obtained as the nontrivial root of the $\beta$-function.  

\par In this work we compute the effect of the Finsler-Randers space-time properties on the critical exponents of the theory defined in Sec. \ref{Randers-Finsler scalar field theory} through Sec. \ref{Normalization conditions method} through the Normalization conditions method. The same task is employed in Sec. \ref{Minimal subtraction scheme} but now in Minimal subtraction scheme. In Sec. \ref{Massless BPHZ method} we attain the same aim by applying the Massless BPHZ method. In Sec. \ref{Critical exponents to any loop order} we compute the critical exponents values valid for any loop levels. In Sec. \ref{Conclusions} we present our conclusions.

\section{Randers-Finsler scalar field theory}\label{Randers-Finsler scalar field theory}

\par The massive scalar field defined at Randers-Finsler space-time is governed by the following dispersion relation \cite{Silva2021}
\begin{eqnarray}
P^{2} + \zeta^{2}a_{\mu}a_{\nu}P^{\mu}P^{\nu} - 2\zeta ma_{\mu}P^{\mu} = - m^{2}.
\end{eqnarray}
So for massless scalar fields, we can write the corresponding Lagrangian density as \cite{Silva2021} 
\begin{eqnarray}\label{huytrji}
\mathcal{L}_{B,\zeta_{a}} = \frac{1}{2}\partial^{\mu}\phi_{B}\partial_{\mu}\phi_{B} + \frac{1}{2}
[(\zeta_{a}\cdot\partial)\phi_{B}]^{2} + \frac{\lambda_{B}}{4!}\phi_{B}^{4} +  \frac{1}{2}t_{B}\phi_{B}^{2},
\end{eqnarray}
in the Euclidean metric signature ($++++$) suitable for critical exponents computation \cite{Amit}, where $\zeta_{a}\equiv \zeta a$. The parameter $\zeta$ and the four-vector $a^{\mu}$ are responsible for taking into account to the properties of the Randers-Finsler space-time. In the limit $\zeta\rightarrow 0$, we recover the Euclidean space-time structure \cite{Amit}. We consider the bare Lagrangian density from the beginning as a bare one once the corresponding quantities are divergent, namely the bare $N$-component field $\phi_{B}$, coupling constant $\lambda_{B}$ and composite field coupling constant $t_{B}$. As any physical theory plagued by divergences does not make sense, we have to get rid such divergences. For that, we employ some renormalization methods. Now we proceed to apply them. 

\section{Normalization conditions method}\label{Normalization conditions method}

\par The normalization condition method \cite{Amit} is characterized by one getting rid the divergences of the theory, where we start from the bare theory. We apply multiplicative renormalization in the primitively one-particle irreducible ($1$PI) vertex parts 
\begin{eqnarray}
\Gamma_{\zeta_{a}}^{(n, l)}(P_{i}, Q_{j}, g, \kappa) = Z_{\phi,\zeta_{a}}^{n/2}Z_{\phi^{2},\zeta_{a}}^{l}\Gamma_{B,\zeta_{a}}^{(n, l)}(P_{i}, Q_{j}, \lambda_{B}),
\end{eqnarray}
where ($i = 1, \cdots, n$, $j = 1, \cdots, l$). We keep their external momenta at fixed values through the set of the following normalization conditions
\begin{eqnarray}\label{ygfdxzsze}
\Gamma_{\zeta_{a}}^{(2)}[P^{2} + (\zeta_{a}\cdot P)^{2} = 0, g] = 0, 
\end{eqnarray}
\begin{eqnarray}
\frac{\partial \Gamma_{\zeta_{a}}^{(2)}[P^{2} + (\zeta_{a}\cdot P)^{2}, g]}{\partial [P^{2} + (\zeta_{a}\cdot P)^{2}]}\Biggr|_{P^{2} + (\zeta_{a}\cdot P)^{2} = \kappa^{2}} = 1,
\end{eqnarray}
\begin{eqnarray}\label{jijhygtfrd}
\Gamma_{\zeta_{a}}^{(4)}[P^{2} + (\zeta_{a}\cdot P)^{2}, g]|_{SP} = g, 
\end{eqnarray}
\begin{eqnarray}
\Gamma_{\zeta_{a}}^{(2,1)}(P_{1}, P_{2}, Q_{3}, g)|_{\overline{SP}} = 1,
\end{eqnarray}
where $Q_{3} = -(P_{1} + P_{2})$ for the symmetry point (SP): $P_{i}\cdot P_{j} = (\kappa^{2}/4)(4\delta_{ij}-1)$, which implies that $(P_{i} + P_{j})^{2} \equiv P^{2} + (\zeta_{a}\cdot P)^{2} = \kappa^{2}$ for $i\neq j$. For another symmetry point $\overline{SP}$: $P_{i}^{2} = 3\kappa^{2}/4$ and $P_{1}\cdot P_{2} = -\kappa^{2}/4$, by implying $(P_{1} + P_{2})^{2} \equiv P^{2} + (\zeta_{a}\cdot P)^{2} = \kappa^{2}$. In terms of $\kappa$ units, we have that $P^{2} + (\zeta_{a}\cdot P)^{2} = \kappa^{2} \rightarrow 1$. The parameter $\kappa$ is some arbitrary momentum scale parameter. Also we define the dimensionless bare $u_{0}$ and renormalized $u$ coupling constants as $\lambda_{B} = u_{B}\kappa^{\epsilon/2}$ and $g = u\kappa^{\epsilon/2}$, respectively, in dimensions $d = 4 - \epsilon$. The primitively $1$PI vertex parts, up to next-to-leading order, are given by
\begin{eqnarray}\label{gtfrdrdes}
\Gamma^{(2)}_{B,\zeta_{a}} = \quad \parbox{12mm}{\includegraphics[scale=1.0]{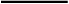}}^{-1} + \frac{1}{6}\hspace{1mm}\parbox{12mm}{\includegraphics[scale=1.0]{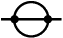}} + \frac{1}{4}\hspace{1mm}\parbox{10mm}{\includegraphics[scale=0.8]{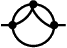}} ,
\end{eqnarray}
\begin{eqnarray}
\Gamma^{(4)}_{B,\zeta_{a}} = \quad \parbox{12mm}{\includegraphics[scale=0.09]{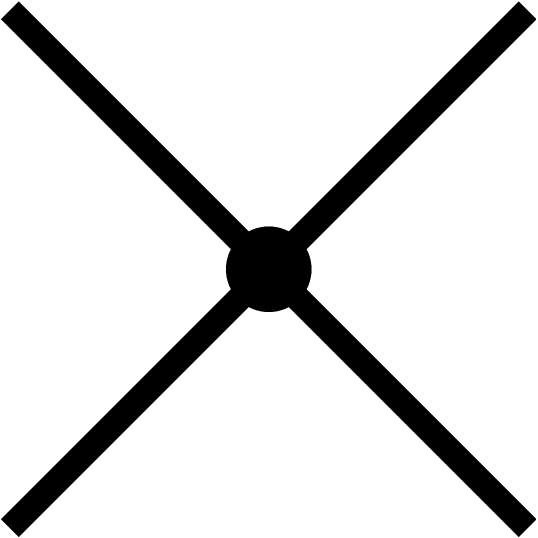}} + \quad \frac{1}{2}\hspace{1mm}\parbox{10mm}{\includegraphics[scale=1.0]{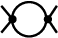}} + 2 \hspace{1mm} perm. +  \frac{1}{4}\hspace{1mm}\parbox{16mm}{\includegraphics[scale=1.0]{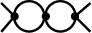}} + 2 \hspace{1mm} perm. +  \frac{1}{2}\hspace{1mm}\parbox{12mm}{\includegraphics[scale=0.8]{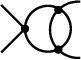}} + 5 \hspace{1mm} perm. , 
\end{eqnarray}
\begin{eqnarray}\label{gtfrdesuuji}
\Gamma^{(2,1)}_{B,\zeta_{a}} = \quad \parbox{14mm}{\includegraphics[scale=1.0]{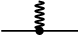}} \quad + \quad \frac{1}{2}\hspace{1mm}\parbox{14mm}{\includegraphics[scale=1.0]{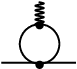}} \quad +  \frac{1}{4}\hspace{1mm}\parbox{12mm}{\includegraphics[scale=1.0]{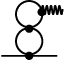}}  + \quad \frac{1}{2}\hspace{1mm}\parbox{12mm}{\includegraphics[scale=0.8]{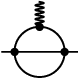}} ,
\end{eqnarray}
where the term \textit{perm.} means a permutation of the external momenta attached to the cut external lines. All internal lines are represented by the following free propagator 
\begin{eqnarray}
G_{0,\zeta_{a}}^{-1}(q) = \parbox{12mm}{\includegraphics[scale=1.0]{fig9.eps}}^{-1} = q^{2} + (\zeta_{a}\cdot q)^{2}.
\end{eqnarray}
We can write the renormalization constants $Z_{\phi,\zeta_{a}}$ and $\overline{Z}_{\phi^{2},\zeta_{a}} \equiv Z_{\phi,\zeta_{a}}Z_{\phi,\zeta_{a}^{2}}$  perturbatively as powers of $u$ and obtain the $\beta_{\zeta_{a}}$-function and anomalous dimensions $\gamma_{\phi,\zeta_{a}}$ and the composite field anomalous dimension $\gamma_{\phi,\zeta_{a}^{2}}$ through the renornalization group equation given by
\begin{eqnarray}
\left( \kappa\frac{\partial}{\partial\kappa} + \beta_{\zeta_{a}}\frac{\partial}{\partial u} - \frac{1}{2}n\gamma_{\phi,\zeta_{a}} + l\gamma_{\phi,\zeta_{a}^{2}} \right)\Gamma_{\zeta_{a}}^{(n, l)} = 0,
\end{eqnarray}
where 
\begin{eqnarray}\label{kjsjffbxdzs}
\beta_{\zeta_{a}}(u) = \kappa\frac{\partial u}{\partial \kappa} = -\epsilon\left(\frac{\partial\ln u_{0}}{\partial u}\right)^{-1},
\end{eqnarray}
\begin{eqnarray}
\gamma_{\phi,\zeta_{a}}(u) = \beta(u)_{\zeta_{a}}\frac{\partial\ln Z_{\phi,\zeta_{a}}}{\partial u},
\end{eqnarray}
\begin{eqnarray}
\gamma_{\phi^{2},\zeta_{a}}(u) = -\beta_{\zeta_{a}}(u)\frac{\partial\ln Z_{\phi^{2},\zeta_{a}}}{\partial u}.
\end{eqnarray}
Instead employing the composite field anomalous dimension $\gamma_{\phi^{2},\zeta_{a}}$, we apply the following one 
\begin{eqnarray}
\overline{\gamma}_{\phi^{2},\zeta_{a}}(u) = -\beta_{\zeta_{a}}(u)\frac{\partial\ln \overline{Z}_{\phi^{2},\zeta_{a}}}{\partial u} \equiv \gamma_{\phi^{2},\zeta_{a}}(u) - \gamma_{\phi,\zeta_{a}}(u)
\end{eqnarray}
for convenience reasons. 

\par The Feynman integrals computation, for considering the effect of the Randers-Finsler space-time, would involve to consider the expansion in the small parameter $\zeta$, up to fourth order for example, in the form 
\begin{eqnarray}
\dfrac{1}{q^{2} + (\zeta_{a}\cdot q)^{2}} =  \dfrac{1}{q^{2}}\left[1 - \dfrac{(\zeta_{a}\cdot q)^{2}}{q^{2}} + \dfrac{(\zeta_{a}\cdot q)^{4}}{(q^{2})^{2}} \right].
\end{eqnarray}
Now by applying dimensional regularization 
\begin{eqnarray}
\int \frac{d^{d}q}{(2\pi)^{d}} \frac{1}{(q^{2} + 2pq + M^{2})^{\alpha}} =     \hat{S}_{d}\frac{1}{2}\frac{\Gamma(d/2)}{\Gamma(\alpha)}\frac{\Gamma(\alpha - d/2)}{(M^{2} - p^{2})^{\alpha - d/2}},
\end{eqnarray}

\begin{eqnarray}
\int \frac{d^{d}q}{(2\pi)^{d}} \frac{q^{\mu}}{(q^{2} + 2pq + M^{2})^{\alpha}} =     -\hat{S}_{d}\frac{1}{2}\frac{\Gamma(d/2)}{\Gamma(\alpha)}\frac{p^{\mu}\Gamma(\alpha - d/2)}{(M^{2} - p^{2})^{\alpha - d/2}},
\end{eqnarray}

\begin{eqnarray}
\int \frac{d^{d}q}{(2\pi)^{d}} \frac{q^{\mu}q^{\nu}}{(q^{2} + 2pq + M^{2})^{\alpha}} =   \hat{S}_{d}\frac{1}{2}\frac{\Gamma(d/2)}{\Gamma(\alpha)} \left[ \frac{1}{2}\delta^{\mu\nu}\frac{\Gamma(\alpha - 1 - d/2)}{(M^{2} - p^{2})^{\alpha - 1 - d/2}}  +  p^{\mu}p^{\nu}\frac{\Gamma(\alpha - d/2)}{(M^{2} - p^{2})^{\alpha - d/2}} \right],\nonumber \\&& 
\end{eqnarray}

\begin{eqnarray}
&&\int \frac{d^{d}q}{(2\pi)^{d}} \frac{q^{\mu}q^{\nu}q^{\rho}}{(q^{2} + 2pq + M^{2})^{\alpha}} = - \hat{S}_{d}\frac{1}{2}\frac{\Gamma(d/2)}{\Gamma(\alpha)}\times \nonumber \\&&    \left[\frac{1}{2}[\delta^{\mu\nu}p^{\rho} + \delta^{\mu\rho}p^{\nu} + \delta^{\nu\rho}p^{\mu}]\frac{\Gamma(\alpha - 1 - d/2)}{(M^{2} - p^{2})^{\alpha - 1 - d/2}}  + \ p^{\mu}p^{\nu}p^{\rho}\frac{\Gamma(\alpha - d/2)}{(M^{2} - p^{2})^{\alpha - d/2}} \right],\nonumber \\
\end{eqnarray}

\begin{eqnarray}
&&\int \frac{d^{d}q}{(2\pi)^{d}} \frac{q^{\mu}q^{\nu}q^{\rho}q^{\sigma}}{(q^{2} + 2pq + M^{2})^{\alpha}} = \hat{S}_{d}\frac{1}{2}\frac{\Gamma(d/2)}{\Gamma(\alpha)}\times  \nonumber \\&&  \left\{\frac{1}{4}[\delta^{\mu\nu}\delta^{\rho\sigma} + \delta^{\mu\rho}\delta^{\nu\sigma} +\delta^{\mu\sigma}\delta^{\nu\rho}]\frac{\Gamma(\alpha - 2 - d/2)}{(M^{2} - p^{2})^{\alpha - 2 - d/2}}  \right.  \nonumber \\  &&\left. + \frac{1}{2}[\delta^{\mu\nu}p^{\rho}p^{\sigma} + \delta^{\mu\rho}p^{\nu}p^{\sigma} + \delta^{\mu\sigma}p^{\nu}p^{\rho}   + \delta^{\nu\rho}p^{\mu}p^{\sigma} +\delta^{\nu\sigma}p^{\mu}p^{\rho} +\delta^{\rho\sigma}p^{\mu}p^{\nu}]\frac{\Gamma(\alpha - 1 - d/2)}{(M^{2} - p^{2})^{\alpha - 1 - d/2}}   \right.  \nonumber \\  &&\left. + p^{\mu}p^{\nu}p^{\rho}p^{\sigma}\frac{\Gamma(\alpha - d/2)}{(M^{2} - p^{2})^{\alpha - d/2}} \right\},
\end{eqnarray}
where $\hat{S}_{d} \equiv S_{d}/(2\pi)^{d} = [2^{d-1}\pi^{d/2}\Gamma(d/2)]^{-1}$ and $S_{d} = 2\pi^{d/2}/\Gamma(d/2)$ is the unit $d$-dimensional sphere area, we obtain for the diagram
\begin{eqnarray}\label{okjhfy}
\parbox{12mm}{\includegraphics[scale=1.0]{fig10.eps}} = \int \dfrac{d^{d}q}{(2\pi)^{d}}\dfrac{1}{q^{2} + (\zeta_{a}\cdot q)^{2}} \dfrac{1}{(q + P)^{2} + [\zeta_{a}\cdot (q + P)]^{2}},
\end{eqnarray}  
the following result, in $d = 4 - \epsilon$ dimensions, 
\begin{eqnarray}
\parbox{10mm}{\includegraphics[scale=1.0]{fig10.eps}}_{SP} = \frac{Z_{\zeta_{a}}}{\epsilon}\left(1 + \frac{1}{2}\epsilon  \right),
\end{eqnarray}  
where
\begin{eqnarray}\label{pfhg}
Z_{\zeta_{a}} = 1 - \frac{1}{2}\zeta_{a}^{2} + \frac{3}{8}\zeta_{a}^{4}.
\end{eqnarray}
We could proceed in computing the remaining Feynman diagrams up to higher orders in $\zeta$ but this would lead to a tedious computation process. As we must show now is that it is possible to compute the Feynman diagrams by considering the effect of the Randers-Finsler space-time in exact form, \emph{i. e.}, by treating the parameter $\zeta$ in an exact way.

\par Consider, for example, the Feynman diagram of Eq. (\ref{okjhfy}). For some arbitrary moment, the inverse free propagator can be written as
\begin{eqnarray}
q^{2} + (\zeta_{a}\cdot q)^{2} = (\delta_{\mu\nu} + \zeta^{2}a_{\mu}a_{\nu})q^{\mu}q^{\nu} = q^{t}(\mathbb{I} + \zeta_{a}\zeta_{a}^{t})q,~
\end{eqnarray}
where the momentum $q$ ($\zeta_{a}$) is a $d$-dimensional vector expressed by some column matrix and $q^{t}$ ($\zeta_{a}^{t}$) is the corresponding row matrix. The $\mathbb{I}$ and $\zeta_{a}\zeta_{a}^{t}$ matrices are the referred matrix representations of both identity and $(\zeta_{a}\zeta_{a}^{t})_{\mu\nu} = \zeta^{2}a_{\mu}a_{\nu}$ matrices. Now by applying the transformation
\begin{eqnarray}
q^{\prime} \rightarrow \sqrt{\mathbb{I} + \zeta_{a}\zeta_{a}^{t}}~q,  
\end{eqnarray}
the $d$-dimensional volume element of the Feynman integral transforms as
\begin{eqnarray}
d^{d}q^{\prime} = \sqrt{\det(\mathbb{I} + \zeta_{a}\zeta_{a}^{t})}\,d^{d}q.
\end{eqnarray}
Thus the integral of of Eq. (\ref{okjhfy}), after making the transformation $q^{\prime}\rightarrow q$, turns out to be
\begin{eqnarray}\label{yokjhfy}
\parbox{12mm}{\includegraphics[scale=1.0]{fig10.eps}} = \mathbf{Z}_{\zeta_{a}}\int \dfrac{d^{d}q}{(2\pi)^{d}}\dfrac{1}{q^{2}(q + P)^{2}},
\end{eqnarray}
where
\begin{eqnarray}\label{iutyf}
\mathbf{Z}_{\zeta_{a}} = \dfrac{1}{\sqrt{\det(\mathbb{I} + \zeta_{a}\zeta_{a}^{t})}}.
\end{eqnarray}
By expanding the Eq. (\ref{iutyf}) in powers of $\zeta$ when $\zeta\ll 1$, we obtain the same result as that of Eq. (\ref{pfhg})  up to fourth order in $\zeta$, as expected, namely
\begin{eqnarray}
\mathbf{Z}_{\zeta_{a}} = 1 - \frac{1}{2}\zeta_{a}^{2} + \frac{3}{8}\zeta_{a}^{4} + ....
\end{eqnarray}
Now we can compute the other Feynman diagrams needed for the determination of the critical exponents up to NLO \cite{Amit} by considering the parameter $\zeta$ exactly and obtain 
\begin{eqnarray}
\parbox{10mm}{\includegraphics[scale=1.0]{fig10.eps}}_{SP} = \frac{\mathbf{Z}_{\zeta_{a}}}{\epsilon}\left(1 + \frac{1}{2}\epsilon  \right), 
\end{eqnarray}   
\begin{eqnarray}
\parbox{12mm}{\includegraphics[scale=1.0]{fig6.eps}}^{\prime} = -\frac{\mathbf{Z}_{\zeta_{a}}}{8\epsilon}\left( 1 + \frac{5}{4}\epsilon \right)\mathbf{\Pi}^{2},
\end{eqnarray}  
\begin{eqnarray}
\parbox{10mm}{\includegraphics[scale=0.9]{fig7.eps}}^{\prime} = -\frac{\mathbf{Z}_{\zeta_{a}}}{6\epsilon^{2}}\left( 1 + 2\epsilon  \right)\mathbf{\Pi}^{3},
\end{eqnarray}  
\begin{eqnarray}
\parbox{12mm}{\includegraphics[scale=0.8]{fig21.eps}}_{SP} = \frac{\mathbf{Z}_{\zeta_{a}}}{2\epsilon^{2}}\left( 1 + \frac{3}{2}\epsilon \right)\mathbf{\Pi}^{2},
\end{eqnarray}  
where (in units of $\kappa$), 
\begin{eqnarray}
\parbox{10mm}{\includegraphics[scale=1.0]{fig10.eps}}_{SP} \equiv \parbox{10mm}{\includegraphics[scale=1.0]{fig10.eps}}\bigg\vert_{P^{2} + (\zeta_{a}\cdot P)^{2} = 1}
\end{eqnarray} 
\begin{eqnarray}
\parbox{10mm}{\includegraphics[scale=1.0]{fig6.eps}}^{\prime} \equiv \dfrac{\partial \parbox{10mm}{\includegraphics[scale=1.0]{fig6.eps}}}{\partial [P^{2} + (\zeta_{a}\cdot P)^{2}]}\bigg\vert_{P^{2} + (\zeta_{a}\cdot P)^{2} = 1}
\end{eqnarray}  
\begin{eqnarray}
\parbox{12mm}{\includegraphics[scale=0.8]{fig21.eps}}_{SP} \equiv \parbox{10mm}{\includegraphics[scale=1.0]{fig10.eps}}\bigg\vert_{P^{2} + (\zeta_{a}\cdot P)^{2} = 1}
\end{eqnarray} 
\begin{eqnarray}
\parbox{10mm}{\includegraphics[scale=0.9]{fig7.eps}}^{\prime} \equiv \dfrac{\partial \parbox{10mm}{\includegraphics[scale=0.9]{fig7.eps}}}{\partial [P^{2} + (\zeta_{a}\cdot P)^{2}]}\bigg\vert_{P^{2} + (\zeta_{a}\cdot P)^{2} = 1}
\end{eqnarray}  
Now we can obtain the $\beta$-function and the anomalous dimensions for any value of $N$ \cite{Amit}, thus obtaining 
\begin{eqnarray}\label{fhufhudh}
\beta_{\zeta_{a}}(u) = -\epsilon u +  \mathbf{Z}_{\zeta_{a}}\frac{N + 8}{6}\left( 1 + \frac{1}{2}\epsilon \right) u^{2} -  \mathbf{Z}_{\zeta_{a}}^{2}\frac{3N + 14}{12}, 
\end{eqnarray}
\begin{eqnarray}\label{gkjlhitu}
\gamma_{\phi,\zeta_{a}}(u) = \mathbf{Z}_{\zeta_{a}}^{2}\frac{N + 2}{72}\left( 1 + \frac{5}{4}\epsilon \right)u^{2} -  \mathbf{Z}_{\zeta_{a}}^{3}\frac{(N + 2)(N + 8)}{864}u^{3},  
\end{eqnarray}
\begin{eqnarray}\label{dkvyenh}
\overline{\gamma}_{\phi^{2},\zeta_{a}}(u) = \mathbf{Z}_{\zeta_{a}}\frac{N + 2}{6}\left( 1 + \frac{1}{2}\epsilon \right) u -  \mathbf{Z}_{\zeta_{a}}^{2}\frac{N + 2}{12}u^{2}.
\end{eqnarray}
Now we can obtain the critical exponents of the theory. For that we have to compute the nontrivial solution for the condition
\begin{eqnarray}
\beta_{\zeta_{a}}(u_{\zeta_{a}}^{*}) = 0, 
\end{eqnarray}
whose value is given by 
\begin{eqnarray}
u_{\zeta_{a}}^{*} = \frac{6\epsilon/\mathbf{Z}_{\zeta_{a}}}{(N + 8)} \left\{ 1 + \epsilon\left[ \frac{3(3N + 14)}{(N + 8)^{2}} -\frac{1}{2} \right]\right\}.
\end{eqnarray}
When we apply the relations $\eta_{\zeta_{a}}\equiv\gamma_{\phi,\zeta_{a}}(u_{\zeta_{a}}^{*})$ and $\nu_{\zeta_{a}}^{-1}\equiv 2 - \eta_{\zeta_{a}} - \overline{\gamma}_{\phi^{2},\zeta_{a}}(u_{\zeta_{a}}^{*})$
we find the following results 
\begin{eqnarray}\label{eta}
\eta_{\zeta_{a}} = \frac{(N + 2)\epsilon^{2}}{2(N + 8)^{2}}\left\{ 1 + \epsilon\left[ \frac{6(3N + 14)}{(N + 8)^{2}} -\frac{1}{4} \right]\right\} \equiv\eta ,
\end{eqnarray}
\begin{eqnarray}\label{nu}
\nu_{\zeta_{a}} = \frac{1}{2} + \frac{(N + 2)\epsilon}{4(N + 8)} +  \frac{(N + 2)(N^{2} + 23N + 60)\epsilon^{2}}{8(N + 8)^{3}} \equiv\nu ,
\end{eqnarray}
where $\eta$ and $\nu$ are the critical exponents for the theory in Euclidean space-time \cite{Wilson197475,PhysRevLett.28.240,PhysRevLett.28.548}, \emph{i. e.}, one for which $\zeta = 0$. The physical interpretation of this result is as follows. The properties we are considering are that of the space-time where the field is embedded and they do not modify how the field field interacts with itself (which would lead to a modification of the critical exponents values), although the $\beta_{\zeta_{a}}$-function and anomalous dimensions depend on $\zeta_{a}$. This result is in accordance with the universality hypothesis, which asserts that the critical exponents values depend only on the dimension $d$, number $N$ of some $N$-component order parameter and the interactions among the constituents of the system. Now we compute the critical exponents by employing a distinct and independent method. 

\section{Minimal subtraction scheme}\label{Minimal subtraction scheme}

\par The next method to be approached is the minimal subtraction scheme one \cite{Amit}. As in the earlier method we start from the bare theory but, on the other hand, now we let the external momenta values arbitrary and do not fix them at specific values as in the normalization conditions method. This fact displays the generality and elegance of the present method. Then, by computing the same diagrams as the computed in the early Sec. but now with their external momenta kept at arbitrary values we obtain
\begin{eqnarray}
\parbox{10mm}{\includegraphics[scale=1.0]{fig10.eps}} = \frac{\mathbf{Z}_{\zeta_{a}}}{\epsilon} \left\{1 - \frac{1}{2}\epsilon - \frac{1}{2}\epsilon L[P^{2} + (\zeta_{a}\cdot P)^{2}] \right\},
\end{eqnarray}   
\begin{eqnarray}
\parbox{12mm}{\includegraphics[scale=1.0]{fig6.eps}} = -\mathbf{Z}_{\zeta_{a}}^{2}\frac{P^{2} + (\zeta_{a}\cdot P)^{2}}{8\epsilon}\left\{ 1 + \frac{1}{4}\epsilon -2\epsilon L_{3}[P^{2} + (\zeta_{a}\cdot P)^{2}] \right\},
\end{eqnarray}  
\begin{eqnarray}
\parbox{10mm}{\includegraphics[scale=0.9]{fig7.eps}} = -\mathbf{Z}_{\zeta_{a}}^{3}\frac{P^{2} + (\zeta_{a}\cdot P)^{2}}{6\epsilon^{2}}\left\{ 1 + \frac{1}{2}\epsilon -3\epsilon L_{3}[P^{2} + (\zeta_{a}\cdot P)^{2}] \right\},
\end{eqnarray}  
\begin{eqnarray}
\parbox{12mm}{\includegraphics[scale=0.8]{fig21.eps}} = \frac{\mathbf{Z}_{\zeta_{a}}^{2}}{2\epsilon^{2}}\left\{1 - \frac{1}{2}\epsilon - \epsilon L[P^{2} + (\zeta_{a}\cdot P)^{2}] \right\},
\end{eqnarray}
where
\begin{eqnarray}\label{uhduhguh}
L[P^{2} + (\zeta_{a}\cdot P)^{2}] = \int_{0}^{1}dx\ln\{x(1-x)[P^{2} + (\zeta_{a}\cdot P)^{2}]\},
\end{eqnarray}
\begin{eqnarray}\label{uhduhguhf}
L_{3}[P^{2} + (\zeta_{a}\cdot P)^{2}] =  \int_{0}^{1}dx(1-x)\ln\{x(1-x)[P^{2} + (\zeta_{a}\cdot P)^{2}]\},
\end{eqnarray}
The $\beta_{\zeta_{a}}$-function and anomalous dimensions now assume the form
\begin{eqnarray}\label{uahuahuahu}
\beta_{\zeta_{a}}(u) = -\epsilon u + \mathbf{Z}_{\zeta_{a}}\frac{N + 8}{6}u^{2} - \mathbf{Z}_{\zeta_{a}}^{2}\frac{3N + 14}{12}u^{3},
\end{eqnarray}
\begin{eqnarray}
\gamma_{\phi,\zeta_{a}}(u) = \mathbf{Z}_{\zeta_{a}}^{2}\frac{N + 2}{72}u^{2} - \mathbf{Z}_{\zeta_{a}}^{3}\frac{(N + 2)(N + 8)}{1728}u^{3},
\end{eqnarray}
\begin{eqnarray}\label{uahuahuahuaa}
\overline{\gamma}_{\phi^{2},\zeta_{a}}(u) = \mathbf{Z}_{\zeta_{a}}\frac{N + 2}{6} u - \mathbf{Z}_{\zeta_{a}}^{2}\frac{N + 2}{12}u^{2}.
\end{eqnarray}
The external momentum-dependent integrals of Eqs. (\ref{uhduhguh})-(\ref{uhduhguhf}) have canceled out in the renormalization process of computation of the $\beta_{\zeta_{a}}$-function and anomalous dimensions ans must not be evaluated. This shows its elegance and generality, once we do not have to choose any fixed specific values for the external momenta. Now the nontrivial fixed point is given by 
\begin{eqnarray}
 u_{\zeta_{a}}^{*} = \frac{6\epsilon/\mathbf{Z}_{\zeta_{a}}}{(N + 8)}\left\{ 1 + \epsilon\left[ \frac{3(3N + 14)}{(N + 8)^{2}} \right]\right\}.
\end{eqnarray}
When we compute the critical exponents, from $\eta_{\zeta_{a}}\equiv\gamma_{\phi,\zeta_{a}}(u_{\zeta_{a}}^{*})$ and $\nu_{\zeta_{a}}^{-1}\equiv 2 - \eta_{\zeta_{a}} - \overline{\gamma}_{\phi^{2},\zeta_{a}}(u_{\zeta_{a}}^{*})$, we attain the same results as the ones of the earlir Sec., \emph{i. e.}, that the critical exponents values obtained are the same as those from the Euclidean theory \cite{Wilson197475,PhysRevLett.28.240,PhysRevLett.28.548}. This procedure of obtaining the same result employing distinct and independent methods reinforces the arbitrariness of the field theoretic renormalization group scheme we can adopt. Finally we have apply a third renormalization scheme in next Sect..

\section{Massless BPHZ method}\label{Massless BPHZ method}

\par In the present method, the BPHZ (Bogoliubov-Parasyuk-Hepp-Zimmermann) one \cite{BogoliubovParasyuk,Hepp,Zimmermann}, we start from the renormalized Lagrangian density as opposed to those from earliers Secs.. When the initial Lagrangian density is a bare one at one-loop order, we introduce terms to the initial Lagrange density for eliminating the divergences for finding a finite one. We repeat this procedure to the loop order and so on. We absorb the divergences of the theory though the following renormalization constants
\begin{eqnarray}\label{uhguhfgugu}
Z_{\phi,\zeta_{a}} = 1 + \sum_{i=1}^{\infty} c_{\phi}^{i}, 
\end{eqnarray} 
\begin{eqnarray}\label{uhguhfgugud}
Z_{\zeta_{a},u} = 1 + \sum_{i=1}^{\infty} c_{u}^{i},
\end{eqnarray} 
\begin{eqnarray}\label{uhguhfgugus}
Z_{\phi^{2},\zeta_{a}} = 1 + \sum_{i=1}^{\infty} c_{\phi^{2}}^{i},
\end{eqnarray} 
where 
\begin{eqnarray}\label{huytrjii}
\mathcal{L}_{\zeta_{a}} = \frac{1}{2}Z_{\phi}\partial^{\mu}\phi\partial_{\mu}\phi + \frac{1}{2}Z_{\phi}[(\zeta_{a}\cdot\partial)\phi]^{2} + \frac{\mu^{\epsilon}u}{4!}Z_{u}\phi^{4} +  \frac{1}{2}tZ_{\phi^{2},\zeta_{a}}\phi^{2}
\end{eqnarray}
and
\begin{eqnarray}\label{huytr}
\phi = Z_{\phi,\zeta_{a}}^{-1/2}\phi_{B}, ~~~~ \lambda = \mu^{-\epsilon}\frac{Z_{\phi,\zeta_{a}}^{2}}{Z_{u,\zeta_{a}}}\lambda_{B}, ~~~~ t = \frac{Z_{\phi,\zeta_{a}}}{Z_{\phi^{2},\zeta_{a}}}t_{B}. 
\end{eqnarray} 
 
The renormalization constants, in terms of Feynman diagrams, are given by
\begin{eqnarray}\label{Zphi}
&& Z_{\phi,\zeta_{a}}(u,\epsilon^{-1}) = 1 +  \frac{1}{P^{2} + (\zeta_{a}\cdot P)^{2}} \Biggl[ \frac{1}{6} \mathcal{K} \left(\parbox{12mm}{\includegraphics[scale=1.0]{fig6.eps}} \right) S_{\parbox{10mm}{\includegraphics[scale=0.5]{fig6.eps}}}  + \frac{1}{4} \mathcal{K} \left(\parbox{12mm}{\includegraphics[scale=1.0]{fig7.eps}} \right) S_{\parbox{6mm}{\includegraphics[scale=0.5]{fig7.eps}}} + \frac{1}{3} \mathcal{K}   \left(\parbox{12mm}{\includegraphics[scale=1.0]{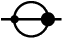}} \right) S_{\parbox{6mm}{\includegraphics[scale=0.5]{fig26.eps}}} \Biggr], \hspace{4mm}
\end{eqnarray}

\begin{eqnarray}\label{Zg}
&& Z_{u,\zeta_{a}}(u,\epsilon^{-1}) = 1 + \frac{1}{\mu^{\epsilon}u} \Biggl[ \frac{1}{2} \mathcal{K} 
\left(\parbox{10mm}{\includegraphics[scale=1.0]{fig10.eps}} + 2 \hspace{1mm} perm.
\right) S_{\parbox{8mm}{\includegraphics[scale=0.5]{fig10.eps}}}  + \frac{1}{4} \mathcal{K} 
\left(\parbox{17mm}{\includegraphics[scale=1.0]{fig11.eps}} + 2 \hspace{1mm} perm. \right) S_{\parbox{10mm}{\includegraphics[scale=0.5]{fig11.eps}}}  + \nonumber \\&& \frac{1}{2} \mathcal{K} 
\left(\parbox{12mm}{\includegraphics[scale=.8]{fig21.eps}} + 5 \hspace{1mm} perm. \right) S_{\parbox{10mm}{\includegraphics[scale=0.4]{fig21.eps}}}  + \mathcal{K}   \left(\parbox{10mm}{\includegraphics[scale=1.0]{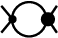}} + 2 \hspace{1mm} perm. \right) S_{\parbox{6mm}{\includegraphics[scale=0.5]{fig25.eps}}}\Biggr],
\end{eqnarray}

\begin{eqnarray}\label{Zphi2}
&& Z_{\phi^{2},\zeta_{a}}(u,\epsilon^{-1}) = 1 + \frac{1}{2} \mathcal{K} 
\left(\parbox{14mm}{\includegraphics[scale=1.0]{fig14.eps}} \right) S_{\parbox{10mm}{\includegraphics[scale=0.5]{fig14.eps}}} + \frac{1}{4} \mathcal{K} 
\left(\parbox{12mm}{\includegraphics[scale=1.0]{fig16.eps}} \right) S_{\parbox{10mm}{\includegraphics[scale=0.5]{fig16.eps}}} + \frac{1}{2} \mathcal{K} \left(\parbox{11mm}{\includegraphics[scale=.8]{fig17.eps}} \right) S_{\parbox{8mm}{\includegraphics[scale=0.4]{fig17.eps}}} + \nonumber \\ &&  \frac{1}{2} \mathcal{K}
  \left(\parbox{12mm}{\includegraphics[scale=.2]{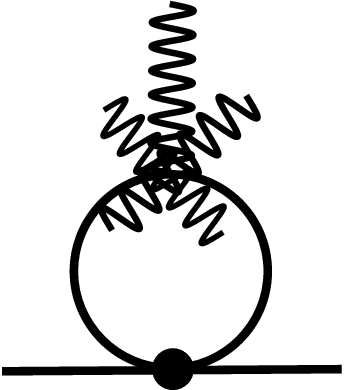}} \right) S_{\parbox{7mm}{\includegraphics[scale=.12]{fig31.eps}}} + \frac{1}{2} \mathcal{K}
  \left(\parbox{12mm}{\includegraphics[scale=.2]{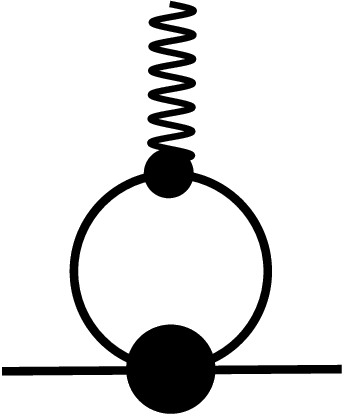}} \right) S_{\parbox{7mm}{\includegraphics[scale=.12]{fig32.eps}}},
\end{eqnarray}
where, for example, the factor $S_{\parbox{6mm}{\includegraphics[scale=0.5]{fig6.eps}}}$ is the symmetry factor for that Feynman diagram when the number $N$ of components of the field is not equal to 1. The Feynman diagrams needed in the present method are the ones \cite{Amit}
\begin{eqnarray}
\parbox{12mm}{\includegraphics[scale=1.0]{fig6.eps}} =  -\mathbf{Z}_{\zeta_{a}}^{2}\frac{u^{2}[P^{2} + (\zeta_{a}\cdot P)^{2}]}{8\epsilon} \left\{ 1 + \frac{1}{4}\epsilon -2\epsilon L_{3}[P^{2} + (\zeta_{a}\cdot P)^{2}] \right\},
\end{eqnarray}  
\begin{eqnarray}
\parbox{10mm}{\includegraphics[scale=0.9]{fig7.eps}} = \mathbf{Z}_{\zeta_{a}}^{3}\frac{u^{3}[P^{2} + (\zeta_{a}\cdot P)^{2}]}{6\epsilon^{2}} \left\{ 1 + \frac{1}{2}\epsilon -3\epsilon L_{3}[P^{2} + (\zeta_{a}\cdot P)^{2}] \right\} \mathbf{\Pi}^{3},
\end{eqnarray}   
\begin{eqnarray}
\parbox{10mm}{\includegraphics[scale=1.0]{fig10.eps}} =  \frac{\mathbf{Z}_{\zeta_{a}}^{2}\mu^{\epsilon}u^{2}}{\epsilon}  \left\{1 - \frac{1}{2}\epsilon - \frac{1}{2}\epsilon L[P^{2} + (\zeta_{a}\cdot P)^{2}] \right\} \mathbf{\Pi},
\end{eqnarray}   
\begin{eqnarray}
\parbox{12mm}{\includegraphics[scale=0.8]{fig21.eps}} =  -\frac{\mathbf{Z}_{\zeta_{a}}^{2}\mu^{\epsilon}u^{3}}{2\epsilon^{2}}  \left\{1 - \frac{1}{2}\epsilon - \epsilon L[P^{2} + (\zeta_{a}\cdot P)^{2}] \right\} \mathbf{\Pi}^{2},
\end{eqnarray}  
where
\begin{eqnarray}\label{uahuahuahlol}
L[P^{2} + (\zeta_{a}\cdot P)^{2}] =  \int_{0}^{1}dx\ln\left\{\frac{x(1-x)[P^{2} + (\zeta_{a}\cdot P)^{2}]}{\mu^{2}}\right\},
\end{eqnarray}
\begin{eqnarray}
L_{3}[P^{2} + (\zeta_{a}\cdot P)^{2}] =  \int_{0}^{1}dx(1-x)\ln\left\{\frac{x(1-x)[P^{2} + (\zeta_{a}\cdot P)^{2}]}{\mu^{2}}\right\}. 
\end{eqnarray}
We can compute the $\beta_{\zeta_{a}}$-function and anomalous dimensions from the renormalization group equation
\begin{eqnarray}\left( \mu\frac{\partial}{\partial\mu} + \beta_{\zeta_{a}}\frac{\partial}{\partial u} - \frac{1}{2}n\gamma_{\phi,\zeta_{a}} + l\gamma_{\phi^{2},\zeta_{a}} \right)\Gamma_{\zeta_{a}}^{(n,l)} = 0,
\end{eqnarray}
where 
\begin{eqnarray}\label{tjjaffaxdzs}
\beta_{\zeta_{a}}(u) = \mu\frac{\partial u}{\partial \mu},
\end{eqnarray}
\begin{eqnarray}\label{ejjaffaxdzsd}
\gamma_{\phi,\zeta_{a}}(u) = \mu\frac{\partial\ln Z_{\phi,\zeta_{a}}}{\partial \mu},
\end{eqnarray}
\begin{eqnarray}\label{djjaffaxdzse}
\gamma_{\phi^{2},\zeta_{a}}(u) = -\mu\frac{\partial\ln Z_{\phi^{2},\zeta_{a}}}{\partial \mu}.
\end{eqnarray}
Thus we obtain
\begin{eqnarray}\label{reewriretjgjk}
\beta_{\zeta_{a}}(u) = -\epsilon u + \mathbf{Z}_{\zeta_{a}}\frac{N + 8}{6} u^{2} - \mathbf{Z}_{\zeta_{a}}^{2}\frac{3N + 14}{12},
\end{eqnarray} 
\begin{eqnarray}\label{jkjkpfgjrftj}
\gamma_{\phi,\zeta_{a}}(u) = \mathbf{Z}_{\zeta_{a}}^{2}\frac{N + 2}{72}u^{2} - \mathbf{Z}_{\zeta_{a}}^{3}\frac{(N + 2)(N + 8)}{1728}u^{3},
\end{eqnarray} 
\begin{eqnarray}\label{gfydsguyfsdgufa}
\gamma_{\phi^{2},\zeta_{a}}(u) = \mathbf{Z}_{\zeta_{a}}\frac{N + 2}{6} u - \mathbf{Z}_{\zeta_{a}}^{2}\frac{5(N + 2)}{72}u^{2}.
\end{eqnarray} 
By computing the nontrivial fixed point and obtaining 
\begin{eqnarray}
u_{\zeta_{a}}^{*} = \frac{6\epsilon/\mathbf{Z}_{\zeta_{a}}}{(N + 8)}\left\{ 1 + \epsilon\left[ \frac{3(3N + 14)}{(N + 8)^{2}} \right]\right\}
\end{eqnarray}
and applying the relations $\eta_{\zeta_{a}}\equiv\gamma_{\phi,\zeta_{a}}(u^{*})$ and $\nu_{\zeta_{a}}^{-1}\equiv 2 - \gamma_{\phi^{2},\zeta_{a}}(u_{\zeta_{a}}^{*})$, we obtain that the critical exponents values obtained are the same as those from the Euclidean theory \cite{Wilson197475,PhysRevLett.28.240,PhysRevLett.28.548}. Now we have to generalize the results obtained NLO to any loop order.

\section{Critical exponents to any loop order}\label{Critical exponents to any loop order}

\par We can compute the critical exponents values for any loop level. For that we have to employ the following theorem:  
\begin{theorem} 
For some momentum space general $1$PI Feynman diagram of arbitrary loop order in a theory whose Lagrangian density is that of Eq. (\ref{huytrji}), we can compute its value in dimensional regularization in $d = 4 - \epsilon$ and write the corresponding result as a general functional $\mathbf{Z}_{\zeta_{a}}^{L}\mathcal{F}(u,P^{2} + (\zeta_{a}\cdot P)^{2},\epsilon,\mu)$ if its Euclidean space-time general functional counterpart is given by $\mathcal{F}(u,P^{2},\epsilon,\mu)$. The number of loops of the referred diagram is $L$.
\end{theorem}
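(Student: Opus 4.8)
The plan is to reduce the claim to the single diagrammatic identity already established for the one-loop bubble of Eq.~(\ref{yokjhfy})--(\ref{iutyf}), and then to show that this identity propagates through the Feynman-parametric construction of an arbitrary $L$-loop $1$PI diagram. The starting observation is that in the Lagrangian of Eq.~(\ref{huytrji}) the parameter $\zeta_{a}$ enters \emph{only} through the quadratic form $q^{2}+(\zeta_{a}\cdot q)^{2}=q^{t}(\mathbb{I}+\zeta_{a}\zeta_{a}^{t})q$ appearing in every free propagator; the vertices are untouched. Hence, for a diagram with $I$ internal lines carrying loop momenta $k_{1},\dots,k_{L}$ and external momenta $P$, the integrand is a product of $I$ factors $[\ell_{i}^{t}(\mathbb{I}+\zeta_{a}\zeta_{a}^{t})\ell_{i}]^{-1}$, where each $\ell_{i}$ is a linear combination of the $k$'s and $P$'s. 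The first step is to perform the simultaneous change of variables $k_{j}'=\sqrt{\mathbb{I}+\zeta_{a}\zeta_{a}^{t}}\,k_{j}$ on all $L$ loop momenta; the Jacobian contributes $[\det(\mathbb{I}+\zeta_{a}\zeta_{a}^{t})]^{-L/2}=\mathbf{Z}_{\zeta_{a}}^{L}$ overall, exactly the claimed prefactor and the one power of $\mathbf{Z}_{\zeta_{a}}$ per loop.

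The second step is to handle the external momenta. After the substitution, each denominator becomes $(\ell_{i}')^{2}$ where $\ell_{i}'$ is a linear combination of the $k_{j}'$ with the \emph{same} coefficients as before, plus a term $\sqrt{\mathbb{I}+\zeta_{a}\zeta_{a}^{t}}\,P_{\rm ext}$ built from the external momenta. So the transformed integral is literally the Euclidean integrand evaluated at shifted external momenta $\tilde{P}\equiv\sqrt{\mathbb{I}+\zeta_{a}\zeta_{a}^{t}}\,P$. I would then invoke Lorentz/rotational covariance of the dimensionally-regularized Euclidean integral: the Euclidean functional $\mathcal{F}$ depends on the external data only through scalar invariants $\tilde P_{r}\cdot\tilde P_{s}=P_{r}^{t}(\mathbb{I}+\zeta_{a}\zeta_{a}^{t})P_{s}=P_{r}\cdot P_{s}+(\zeta_{a}\cdot P_{r})(\zeta_{a}\cdot P_{s})$. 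For the $1$PI two- and four-point functions relevant to the critical exponents, the kinematic configuration (single independent momentum $P$, or the symmetry point) collapses every such invariant to a multiple of $P^{2}+(\zeta_{a}\cdot P)^{2}$, so $\mathcal{F}(u,\tilde P,\epsilon,\mu)=\mathcal{F}(u,P^{2}+(\zeta_{a}\cdot P)^{2},\epsilon,\mu)$, which is the stated form. I would phrase this via the Feynman-parameter/Symanzik-polynomial representation, where after combining denominators and shifting loop momenta the dependence on external momenta is entirely through the second Symanzik polynomial evaluated on $\tilde P$, making the invariance manifest term by term.

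The third step is the interchange of the change of variables with dimensional regularization: the substitution $k\mapsto\sqrt{\mathbb{I}+\zeta_{a}\zeta_{a}^{t}}\,k$ is a constant invertible linear map, and in dimensional regularization such linear changes of integration variable are legitimate (the $d$-dimensional integral is defined so as to be linear-covariant), so the manipulation is valid diagram by diagram before any subtraction of subdivergences; since the $\mathbf{Z}_{\zeta_{a}}^{L}$ prefactor and the $P^{2}\to P^{2}+(\zeta_{a}\cdot P)^{2}$ replacement commute with taking pole parts $\mathcal{K}(\cdot)$ and with nested BPHZ subtractions, the statement descends to renormalized quantities as well. I expect the main obstacle to be purely expository rather than substantive: making precise that for the specific vertex parts $\Gamma^{(2)},\Gamma^{(4)},\Gamma^{(2,1)}$ at their chosen external configurations every external invariant that can appear really does reduce to $P^{2}+(\zeta_{a}\cdot P)^{2}$ (so that no mixed invariant $\zeta_{a}\cdot P$ survives on its own), and, if one wants the theorem for a genuinely \emph{general} $1$PI diagram with several independent external momenta, either restricting the statement to the single-scale configurations used later or else allowing $\mathcal{F}$ to depend on the full set of transformed invariants $\{P_{r}\cdot P_{s}+(\zeta_{a}\cdot P_{r})(\zeta_{a}\cdot P_{s})\}$. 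The loop-momentum bookkeeping, the Jacobian, and the regularization step are all routine once the algebraic substitution is set up.
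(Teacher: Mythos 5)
Your proposal follows essentially the same route as the paper: a linear change of variables $k\mapsto\sqrt{\mathbb{I}+\zeta_{a}\zeta_{a}^{t}}\,k$ on each of the $L$ loop momenta, with the Jacobian supplying one factor of $\mathbf{Z}_{\zeta_{a}}$ per loop, which is exactly the paper's argument. Your additional second and third steps (tracking how the external momenta get mapped to the invariants $P_{r}\cdot P_{s}+(\zeta_{a}\cdot P_{r})(\zeta_{a}\cdot P_{s})$, and checking compatibility with dimensional regularization and the subtraction procedure) are points the paper leaves implicit, so your write-up is the same proof carried out more carefully rather than a different one.
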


\begin{proof} 
In some arbitrary Feynman diagram of loop order $L$, we have to compute a multidimensional integral for $L$ momentum variables $q_{1}$, $q_{2}$,...,$q_{L}$. The volume element associated to each momentum variable is given by $d^{d}q_{i}$ ($i = 1, 2,...,L$). In Sec. \ref{Normalization conditions method}, we have shown that the variable change $q^{\prime} \rightarrow \sqrt{\mathbb{I} + \zeta_{a}\zeta_{a}^{t}}~q$ transforms a given volume element into $d^{d}q^{\prime} = \sqrt{det(\mathbb{I} + \zeta_{a}\zeta_{a}^{t})}d^{d}q$. Thus $d^{d}q = d^{d}q^{\prime}/\sqrt{det(\mathbb{I} + \zeta_{a}\zeta_{a}^{t})} \equiv \mathbf{Z}_{\zeta_{a}}d^{d}q^{\prime}$. So a power of $\mathbf{Z}_{\zeta_{a}}$ emerges from each momentum integration. As there are $L$ integrals, we obtain that each diagram can be written as $\mathbf{Z}_{\zeta_{a}}^{L}\mathcal{F}(u,P^{2} + (\zeta_{a}\cdot P)^{2},\epsilon,\mu)$, where $\mathcal{F}(u,P^{2},\epsilon,\mu)$ is the corresponding computed Feynman diagram result for the Euclidean space-time. 
\end{proof} 

\par Now without loss of generality, we apply the method of last Sec., as the critical exponents obtained through any renormalization scheme are the same. As the theory in Euclidean space-time is renormalizable for all loop orders \cite{Amit}, we can employ the theorem of the present Sec. and write any Feynman diagram of the present theory as $\mathbf{Z}_{\zeta_{a}}^{L}\mathcal{F}(u,P^{2} + (\zeta_{a}\cdot P)^{2},\epsilon,\mu)$, where $\mathcal{F}(u,P^{2},\epsilon,\mu)$ is the corresponding computed Feynman diagram result for the Euclidean space-time.  Now by applying the BPHZ theorem \cite{BogoliubovParasyuk,Hepp,Zimmermann}, which asserts that all momentum-dependent integrals are canceled through the renormalization program, order by order in perturbation theory. Thus the only dependence of the $\beta_{\zeta_{a}}$-function and anomalous dimensions on $\zeta_{a}$ comes from the exact $\mathbf{Z}_{\zeta_{a}}$ factor rised to a some power, namely to $L$. Then the exact effect of the Randers-Finsler space-time on the $\beta_{\zeta_{a}}$-function and anomalous dimensions for all loop orders can be written as
\begin{eqnarray}\label{uhgufhduhufdhu}
\beta_{\zeta_{a}}(u) =  -\epsilon u + \sum_{n=2}^{\infty}\mathbf{Z}_{\zeta_{a}}^{n-1}\beta_{n}^{(0)}u^{n}, 
\end{eqnarray}
\begin{eqnarray}
\gamma_{\phi,\zeta_{a}}(u) = \sum_{n=2}^{\infty}\mathbf{Z}_{\zeta_{a}}^{n}\gamma_{n}^{(0)}u^{n},
\end{eqnarray}
\begin{eqnarray}
\gamma_{\phi^{2},\zeta_{a}}(u) = \sum_{n=1}^{\infty}\mathbf{Z}_{\zeta_{a}}^{n}\gamma_{\phi^{2}, n}^{(0)}u^{n},
\end{eqnarray}
where $\beta_{n}^{(0)}$, $\gamma_{n}^{(0)}$ and $\gamma_{\phi^{2}, n}^{(0)}$ are the corresponding nth-loop corrections to the referred functions in Euclidean space-time \cite{Wilson197475,PhysRevLett.28.240,PhysRevLett.28.548}. Now by computing the nontrivial fixed point of the $\beta_{\zeta_{a}}$-function of Eq. (\ref{uhgufhduhufdhu}) valid for all-loop level, we obtain $u_{\zeta_{a}}^{*} = u^{*}/\mathbf{Z}_{\zeta_{a}}$, where $u^{*}$ is the all-loop order nontrivial fixed point for the Euclidean space-time theory. By employing the relations $\eta_{\zeta_{a}}\equiv\gamma_{\phi}(u^{*})$ and $\nu_{\zeta_{a}}^{-1}\equiv 2 - \gamma_{\phi^{2},\zeta_{a}}(u_{\zeta_{a}}^{*})$, we obtain critical exponents as the same as their Euclidean space-time counterparts.

\section{Conclusions}\label{Conclusions}

\par We have studied the scaling behavior of Randers-Finsler massless scalar field theories in the infrared regime. For that, we have computed analytically the radiative corrections to the corresponding anomalous dimensions, related to the critical exponents of the theory, first up to NLO and later, for some induction process, for all-loop levels. We have considered the effect of the Randers-Finsler space-time properties on the critical exponents by considering the parameter characterizing those space-times in its exact form, namely $\zeta_{a}$. We have employed field-theoretic renormalization group and $\epsilon$-expansion techniques at dimensions $d = 4 - \epsilon$ through three distinct and independent methods. The physical interpretation of the obtained results was that the properties we are considering are that of the space-time where the field is embedded and they do not modify how the field field interacts with itself (which would lead to a modification of the critical exponents values), although the $\beta_{\zeta_{a}}$-function and anomalous dimensions depend on $\zeta_{a}$. This result is in accordance with the universality hypothesis, which asserts that the critical exponents values depend only on the dimension $d$, number $N$ of some $N$-component order parameter and the interactions among the constituents of the system. This work can motivate further studies on effects at Randers-Finsler space-times.

\section{Acknowledgements}

\par PRSC would like to thank the Brazilian funding agencies CAPES and CNPq (Grant: Produtividade 306130/2022-0) for financial support.

\bibliography{apstemplate}

%merlin.mbs apsrev4-1.bst 2010-07-25 4.21a (PWD, AO, DPC) hacked
%Control: key (0)
%Control: author (8) initials jnrlst
%Control: editor formatted (1) identically to author
%Control: production of article title (-1) disabled
%Control: page (0) single
%Control: year (1) truncated
%Control: production of eprint (0) enabled
\providecommand{\noopsort}[1]{}\providecommand{\singleletter}[1]{#1}%
\begin{thebibliography}{18}%
\makeatletter
\providecommand \@ifxundefined [1]{%
 \@ifx{#1\undefined}
}%
\providecommand \@ifnum [1]{%
 \ifnum #1\expandafter \@firstoftwo
 \else \expandafter \@secondoftwo
 \fi
}%
\providecommand \@ifx [1]{%
 \ifx #1\expandafter \@firstoftwo
 \else \expandafter \@secondoftwo
 \fi
}%
\providecommand \natexlab [1]{#1}%
\providecommand \enquote  [1]{``#1''}%
\providecommand \bibnamefont  [1]{#1}%
\providecommand \bibfnamefont [1]{#1}%
\providecommand \citenamefont [1]{#1}%
\providecommand \href@noop [0]{\@secondoftwo}%
\providecommand \href [0]{\begingroup \@sanitize@url \@href}%
\providecommand \@href[1]{\@@startlink{#1}\@@href}%
\providecommand \@@href[1]{\endgroup#1\@@endlink}%
\providecommand \@sanitize@url [0]{\catcode `\\12\catcode `\$12\catcode
  `\&12\catcode `\#12\catcode `\^12\catcode `\_12\catcode `\%12\relax}%
\providecommand \@@startlink[1]{}%
\providecommand \@@endlink[0]{}%
\providecommand \url  [0]{\begingroup\@sanitize@url \@url }%
\providecommand \@url [1]{\endgroup\@href {#1}{\urlprefix }}%
\providecommand \urlprefix  [0]{URL }%
\providecommand \Eprint [0]{\href }%
\providecommand \doibase [0]{http://dx.doi.org/}%
\providecommand \selectlanguage [0]{\@gobble}%
\providecommand \bibinfo  [0]{\@secondoftwo}%
\providecommand \bibfield  [0]{\@secondoftwo}%
\providecommand \translation [1]{[#1]}%
\providecommand \BibitemOpen [0]{}%
\providecommand \bibitemStop [0]{}%
\providecommand \bibitemNoStop [0]{.\EOS\space}%
\providecommand \EOS [0]{\spacefactor3000\relax}%
\providecommand \BibitemShut  [1]{\csname bibitem#1\endcsname}%
\let\auto@bib@innerbib\@empty
%</preamble>
\bibitem [{\citenamefont {Bejancu}\ and\ \citenamefont
  {Farran}(2010)}]{Bejancu}%
  \BibitemOpen
  \bibfield  {author} {\bibinfo {author} {\bibfnamefont {A.}~\bibnamefont
  {Bejancu}}\ and\ \bibinfo {author} {\bibfnamefont {H.~R.}\ \bibnamefont
  {Farran}},\ }\href@noop {} {\emph {\bibinfo {title} {Geometry of
  Pseudo-Finsler Submanifolds}}}\ (\bibinfo  {publisher} {Springer},\ \bibinfo
  {year} {2010})\BibitemShut {NoStop}%
\bibitem [{\citenamefont {Hohmann}\ \emph {et~al.}(2022)\citenamefont
  {Hohmann}, \citenamefont {Pfeifer},\ and\ \citenamefont
  {Voicu}}]{10.1063.5.0065944}%
  \BibitemOpen
  \bibfield  {author} {\bibinfo {author} {\bibfnamefont {M.}~\bibnamefont
  {Hohmann}}, \bibinfo {author} {\bibfnamefont {C.}~\bibnamefont {Pfeifer}}, \
  and\ \bibinfo {author} {\bibfnamefont {N.}~\bibnamefont {Voicu}},\ }\href
  {\doibase 10.1063/5.0065944} {\bibfield  {journal} {\bibinfo  {journal} {J.
  Math. Phys.}\ }\textbf {\bibinfo {volume} {63}},\ \bibinfo {pages} {032503}
  (\bibinfo {year} {2022})}\BibitemShut {NoStop}%
\bibitem [{\citenamefont {Silva}(2021)}]{Silva2021}%
  \BibitemOpen
  \bibfield  {author} {\bibinfo {author} {\bibfnamefont {J.~E.~G.}\
  \bibnamefont {Silva}},\ }\href {\doibase 10.1209/0295-5075/133/21002}
  {\bibfield  {journal} {\bibinfo  {journal} {EPL}\ }\textbf {\bibinfo {volume}
  {133}},\ \bibinfo {pages} {21002} (\bibinfo {year} {2021})}\BibitemShut
  {NoStop}%
\bibitem [{\citenamefont {Vacaru}(2010)}]{VACARU2010224}%
  \BibitemOpen
  \bibfield  {author} {\bibinfo {author} {\bibfnamefont {S.~I.}\ \bibnamefont
  {Vacaru}},\ }\href {\doibase https://doi.org/10.1016/j.physletb.2010.05.036}
  {\bibfield  {journal} {\bibinfo  {journal} {Phys. Lett. B}\ }\textbf
  {\bibinfo {volume} {690}},\ \bibinfo {pages} {224} (\bibinfo {year}
  {2010})}\BibitemShut {NoStop}%
\bibitem [{\citenamefont {Praveen}\ and\ \citenamefont
  {Narasimhamurthy}(2025)}]{PRAVEEN2025100030}%
  \BibitemOpen
  \bibfield  {author} {\bibinfo {author} {\bibfnamefont {J.}~\bibnamefont
  {Praveen}}\ and\ \bibinfo {author} {\bibfnamefont {S.}~\bibnamefont
  {Narasimhamurthy}},\ }\href {\doibase
  https://doi.org/10.1016/j.jspc.2025.100030} {\bibfield  {journal} {\bibinfo
  {journal} {JSPC}\ }\textbf {\bibinfo {volume} {3}},\ \bibinfo {pages}
  {100030} (\bibinfo {year} {2025})}\BibitemShut {NoStop}%
\bibitem [{\citenamefont {Kapsabelis}\ \emph {et~al.}(2024)\citenamefont
  {Kapsabelis}, \citenamefont {Saridakis},\ and\ \citenamefont
  {Stavrinos}}]{KapsabelisE}%
  \BibitemOpen
  \bibfield  {author} {\bibinfo {author} {\bibfnamefont {E.}~\bibnamefont
  {Kapsabelis}}, \bibinfo {author} {\bibfnamefont {E.}~\bibnamefont
  {Saridakis}}, \ and\ \bibinfo {author} {\bibfnamefont {P.}~\bibnamefont
  {Stavrinos}},\ }\href {\doibase 10.1103/PhysRevD.88.123510} {\bibfield
  {journal} {\bibinfo  {journal} {Eur. Phys. J. C.}\ }\textbf {\bibinfo
  {volume} {84}},\ \bibinfo {pages} {538} (\bibinfo {year} {2024})}\BibitemShut
  {NoStop}%
\bibitem [{\citenamefont {Basilakos}\ \emph {et~al.}(2013)\citenamefont
  {Basilakos}, \citenamefont {Kouretsis}, \citenamefont {Saridakis},\ and\
  \citenamefont {Stavrinos}}]{PhysRevD.88.123510}%
  \BibitemOpen
  \bibfield  {author} {\bibinfo {author} {\bibfnamefont {S.}~\bibnamefont
  {Basilakos}}, \bibinfo {author} {\bibfnamefont {A.~P.}\ \bibnamefont
  {Kouretsis}}, \bibinfo {author} {\bibfnamefont {E.~N.}\ \bibnamefont
  {Saridakis}}, \ and\ \bibinfo {author} {\bibfnamefont {P.~C.}\ \bibnamefont
  {Stavrinos}},\ }\href {\doibase 10.1103/PhysRevD.88.123510} {\bibfield
  {journal} {\bibinfo  {journal} {Phys. Rev. D}\ }\textbf {\bibinfo {volume}
  {88}},\ \bibinfo {pages} {123510} (\bibinfo {year} {2013})}\BibitemShut
  {NoStop}%
\bibitem [{\citenamefont {Basilakos}\ and\ \citenamefont
  {Stavrinos}(2013)}]{PhysRevD.87.043506}%
  \BibitemOpen
  \bibfield  {author} {\bibinfo {author} {\bibfnamefont {S.}~\bibnamefont
  {Basilakos}}\ and\ \bibinfo {author} {\bibfnamefont {P.}~\bibnamefont
  {Stavrinos}},\ }\href {\doibase 10.1103/PhysRevD.87.043506} {\bibfield
  {journal} {\bibinfo  {journal} {Phys. Rev. D}\ }\textbf {\bibinfo {volume}
  {87}},\ \bibinfo {pages} {043506} (\bibinfo {year} {2013})}\BibitemShut
  {NoStop}%
\bibitem [{\citenamefont {Chang}\ and\ \citenamefont
  {Li}(2009)}]{CHANG2009173}%
  \BibitemOpen
  \bibfield  {author} {\bibinfo {author} {\bibfnamefont {Z.}~\bibnamefont
  {Chang}}\ and\ \bibinfo {author} {\bibfnamefont {X.}~\bibnamefont {Li}},\
  }\href {\doibase https://doi.org/10.1016/j.physletb.2009.05.001} {\bibfield
  {journal} {\bibinfo  {journal} {Phys. Lett. B}\ }\textbf {\bibinfo {volume}
  {676}},\ \bibinfo {pages} {173} (\bibinfo {year} {2009})}\BibitemShut
  {NoStop}%
\bibitem [{\citenamefont {Zinn-Justin}(2002)}]{ZinnJustin}%
  \BibitemOpen
  \bibfield  {author} {\bibinfo {author} {\bibfnamefont {J.}~\bibnamefont
  {Zinn-Justin}},\ }\href@noop {} {\emph {\bibinfo {title} {Quantum Field
  Theory and Critical Phenomena}}}\ (\bibinfo  {publisher} {International
  Series of Monographs on Physics, Oxford University Press},\ \bibinfo {year}
  {2002})\BibitemShut {NoStop}%
\bibitem [{\citenamefont {Amit}\ and\ \citenamefont
  {Mart\'in-Mayor}(2005)}]{Amit}%
  \BibitemOpen
  \bibfield  {author} {\bibinfo {author} {\bibfnamefont {D.~J.}\ \bibnamefont
  {Amit}}\ and\ \bibinfo {author} {\bibfnamefont {V.}~\bibnamefont
  {Mart\'in-Mayor}},\ }\href@noop {} {\emph {\bibinfo {title} {Field Theory,
  The Renormalization Group and Critical Phenomena}}}\ (\bibinfo  {publisher}
  {World Scientific Pub Co Inc},\ \bibinfo {year} {2005})\BibitemShut {NoStop}%
\bibitem [{\citenamefont {Pelissetto}\ and\ \citenamefont
  {Vicari}(2002)}]{Pelissetto2002549}%
  \BibitemOpen
  \bibfield  {author} {\bibinfo {author} {\bibfnamefont {A.}~\bibnamefont
  {Pelissetto}}\ and\ \bibinfo {author} {\bibfnamefont {E.}~\bibnamefont
  {Vicari}},\ }\href@noop {} {\bibfield  {journal} {\bibinfo  {journal} {Phys.
  Rep.}\ }\textbf {\bibinfo {volume} {368}},\ \bibinfo {pages} {549} (\bibinfo
  {year} {2002})}\BibitemShut {NoStop}%
\bibitem [{\citenamefont {Wilson}\ and\ \citenamefont
  {Kogut}(1974)}]{Wilson197475}%
  \BibitemOpen
  \bibfield  {author} {\bibinfo {author} {\bibfnamefont {K.~G.}\ \bibnamefont
  {Wilson}}\ and\ \bibinfo {author} {\bibfnamefont {J.}~\bibnamefont {Kogut}},\
  }\href@noop {} {\bibfield  {journal} {\bibinfo  {journal} {Phys. Rep.}\
  }\textbf {\bibinfo {volume} {12}},\ \bibinfo {pages} {75} (\bibinfo {year}
  {1974})}\BibitemShut {NoStop}%
\bibitem [{\citenamefont {Wilson}\ and\ \citenamefont
  {Fisher}(1972)}]{PhysRevLett.28.240}%
  \BibitemOpen
  \bibfield  {author} {\bibinfo {author} {\bibfnamefont {K.~G.}\ \bibnamefont
  {Wilson}}\ and\ \bibinfo {author} {\bibfnamefont {M.~E.}\ \bibnamefont
  {Fisher}},\ }\href@noop {} {\bibfield  {journal} {\bibinfo  {journal} {Phys.
  Rev. Lett.}\ }\textbf {\bibinfo {volume} {28}},\ \bibinfo {pages} {240}
  (\bibinfo {year} {1972})}\BibitemShut {NoStop}%
\bibitem [{\citenamefont {Wilson}(1972)}]{PhysRevLett.28.548}%
  \BibitemOpen
  \bibfield  {author} {\bibinfo {author} {\bibfnamefont {K.~G.}\ \bibnamefont
  {Wilson}},\ }\href@noop {} {\bibfield  {journal} {\bibinfo  {journal} {Phys.
  Rev. Lett.}\ }\textbf {\bibinfo {volume} {28}},\ \bibinfo {pages} {548}
  (\bibinfo {year} {1972})}\BibitemShut {NoStop}%
\bibitem [{\citenamefont {Bogoliubov}\ and\ \citenamefont
  {Parasyuk}(1957)}]{BogoliubovParasyuk}%
  \BibitemOpen
  \bibfield  {author} {\bibinfo {author} {\bibfnamefont {N.~N.}\ \bibnamefont
  {Bogoliubov}}\ and\ \bibinfo {author} {\bibfnamefont {O.~S.}\ \bibnamefont
  {Parasyuk}},\ }\href@noop {} {\bibfield  {journal} {\bibinfo  {journal} {Acta
  Math.}\ }\textbf {\bibinfo {volume} {97}},\ \bibinfo {pages} {227} (\bibinfo
  {year} {1957})}\BibitemShut {NoStop}%
\bibitem [{\citenamefont {Hepp}(1966)}]{Hepp}%
  \BibitemOpen
  \bibfield  {author} {\bibinfo {author} {\bibfnamefont {K.}~\bibnamefont
  {Hepp}},\ }\href@noop {} {\bibfield  {journal} {\bibinfo  {journal} {Commun.
  Math. Phys.}\ }\textbf {\bibinfo {volume} {2}},\ \bibinfo {pages} {301}
  (\bibinfo {year} {1966})}\BibitemShut {NoStop}%
\bibitem [{\citenamefont {Zimmermann}(1969)}]{Zimmermann}%
  \BibitemOpen
  \bibfield  {author} {\bibinfo {author} {\bibfnamefont {W.}~\bibnamefont
  {Zimmermann}},\ }\href@noop {} {\bibfield  {journal} {\bibinfo  {journal}
  {Commun. Math. Phys.}\ }\textbf {\bibinfo {volume} {15}},\ \bibinfo {pages}
  {208} (\bibinfo {year} {1969})}\BibitemShut {NoStop}%
\end{thebibliography}%

\end{document}